\newtheorem{theorem}{Theorem}
\newtheorem{lemma}[theorem]{Lemma}
\newtheorem{definition}[theorem]{Definition}
\begin{document}
\title{Improved Hardness of Approximating Chromatic Number}
\author{Sangxia Huang \\ KTH Royal Institute of Technology \\ Stockholm, Sweden \\ \texttt{sangxia@csc.kth.se}}
\maketitle

\begin{abstract}
  We prove that for sufficiently large $K$, it is NP-hard to color
  $K$-colorable graphs with less than $2^{K^{1/3}}$ colors. This improves
  the previous result of $K$ versus $K^{O(\log K)}$ in Khot \cite{khotcoloring}.
\end{abstract}

\section{Introduction}
In this paper we improve the inapproximability result for approximating the chromatic number of
graphs. More specifically, given a $K$-colorable graph, we would like to color it with as few
color as possible.
This problem is closely related to a number of other problems such as approximation of
independent sets and PCPs with low amortized free bit complexity. On the algorithmic side,
Karger, Motwani and Sudan \cite{kargermotwanisudan} give an algorithm
to color a $K$-colorable graph with $\tilde{O}(n^{1-3/(K+1)})$ colors,
whereas Blum and Karger \cite{blumkarger} gave an algorithm to color a 3-colorable
graph with $\tilde{O}(n^{3/14})$ colors.
There have been many works on the hardness side as well.
It is known that coloring 3-colorable graph with 5 colors is NP-hard, and for general
$K$-colorable graph it is NP-hard to color with $K+2\lceil \frac{K}{3} \rceil$ colors
\cite{hard3, hard5}.
%\cite{hard1,hard2,hard3,hard4}, 
For all sufficiently large $K$, the best known gap is by Khot \cite{khotcoloring}
which proved that it is NP-hard to color a $K$-colorable graph with $K^{O(\log K)}$ colors.
This is equivalent to saying that it is NP-hard to distinguish a $K$-colorable graph from
a graph that cannot be colored with fewer than $K^{O(\log K)}$ colors.
% TODO follows from construction that it works for bounded degree graphs
Assuming a variant of Khot's 2-to-1 Conjecture, Dinur, Mossel and Regev \cite{dmrcoloring} proved
that it is NP-hard to $K'$-color a $K$-colorable graph for any $3 \le K < K'$.

Khot's hardness result \cite{khotcoloring} can be derived either using PCPs from H{\aa}stad and Khot \cite{hastadkhot}
or Samorodnitsky and Trevisan \cite{stfirst}. The PCPs is then combined with a randomization technique to show
the completeness of the reduction. We can view the results in \cite{hastadkhot} and \cite{stfirst} as showing approximation
resistance for a family of Boolean predicates that has very few accepting inputs
--- it is NP-hard to approximate Max CSP with those predicates better than just doing random assignments.
For each integer $k>0$, the approximation resistance predicates we get from \cite{hastadkhot} and \cite{stfirst}
has $k$ variables (and thus $2^k$ possible assignments) but only has $2^{O(\sqrt{k})}$ accepting assignments.
The predicate \cite{hastadkhot} is approximation resistance even on satisfiable instances --- or perfect completeness in PCP language
--- while the predicate from \cite{stfirst} is not.
It is noted in \cite{khotcoloring} that having perfect completeness makes the reduction for coloring easier.

In a recent breakthrough, Chan \cite{chan} proved approximation resistance for a family of predicates on $k$
variables but only has $k+1$ accepting assignments whenever $k$ is of the form $k=2^r-1$. 
Previously, approximation resistance of those predicates are only known assuming the Unique Games Conjecture
\cite{hadamardugc}.
Hast \cite{hast} proved that predicates on $k$ variables having at most $2\lfloor k/2 \rfloor+1$ ($=k$ in the
current setting) accepting inputs are not approximation resistant.
This hardness also matches up to constant factor with
the algorithm for Max CSP by Charikar, Makarychev and Makarychev \cite{cmm09}.

It is an interesting question if this improved result can be used as in \cite{khotcoloring} to get better
approximation gap for chromatic numbers. In the following theorem, we show that this is indeed the case.
\begin{theorem}
  For all sufficiently large $K$, it is NP-hard to color a $K$-colorable graph with $2^{K^{1/3}}$ colors. 
%% TODO  Moreover, this hardness result holds for graphs that have degree bounded by some $D=D(K)$.
\end{theorem}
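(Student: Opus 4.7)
The plan is to follow the overall framework of Khot \cite{khotcoloring} while substituting the older approximation-resistant predicates (from \cite{hastadkhot} or \cite{stfirst}) with Chan's recent family \cite{chan}, which has only $k+1$ accepting assignments on $k = 2^r - 1$ variables instead of $2^{\Theta(\sqrt{k})}$. Since the YES-side chromatic number in Khot's reduction is essentially controlled by the number of accepting assignments of the underlying predicate, while the NO side is driven by $2^{-k}$-style soundness, this substitution should tighten the chromatic-number gap from $K$ versus $K^{O(\log K)}$ to $K$ versus $2^{K^{1/3}}$, with the cube-root factor coming from the losses inherent in the PCP-to-coloring reduction.

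Concretely, the reduction has three main pieces. First, I will set up an outer verifier: a smooth, parallel-repeated Label Cover instance with arbitrarily small soundness. Second, I will compose it with a Long-Code based inner verifier that invokes Chan's predicate on $k$ bits; the resulting graph has vertices indexed by (Label Cover vertex, Long Code entry) pairs, with edges induced by the PCP tests. The Long-Code Fourier analysis combined with Chan's approximation-resistance bound will give a chromatic number of at least $2^{\Omega(k)}$ in the NO case. Third, to ensure that YES instances are genuinely $K$-colorable for $K$ as small as $\mathrm{poly}(k)$, I adopt Khot's randomization technique, which assigns colors based on an accepting assignment of the predicate together with the (guessed) consistent Label Cover labeling; this bounds the YES chromatic number by roughly $k^c$ for some small constant $c$. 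Balancing by choosing $k \asymp K^{1/3}$ then yields the claimed $K$ versus $2^{K^{1/3}}$ gap.

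The main obstacle I expect to face is the completeness of the PCP. Khot explicitly points out that perfect completeness is what makes his coloring reduction tractable; the predicate of \cite{hastadkhot} has perfect completeness but the predicate of \cite{stfirst} does not, and Chan's result, which relies on Hadamard-style linearity tests, lies on the delicate side of this divide. I therefore anticipate having to either verify that Chan's construction can be set up so as to produce a PCP with perfect completeness (at worst a constant-factor increase in the accepting-assignment count, which is harmless given how much slack the bound $k+1$ leaves), or to reprove Khot's coloring reduction under a suitably relaxed completeness hypothesis and show that the additional loss is absorbed by our parameter choice. The remaining ingredients --- the Long-Code analysis of the NO-side chromatic number, and the randomization used to certify YES-side $K$-colorability --- are direct analogues of the corresponding arguments in \cite{khotcoloring} and should go through after routine bookkeeping for Chan's predicate.
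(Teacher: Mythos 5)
Your proposal takes essentially the same approach as the paper: use Chan's Hadamard-predicate PCP as the starting point, build the FGLSS graph, and apply Khot's $t$-bit randomization of the Label-Cover labels to bound the chromatic number in the YES case, with the balance $2^t = \Theta(K^3)$ (one factor $K$ from the $\approx 1/K$ probability that a fixed $\alpha$ hits a given accepting answer, and $K^2$ from the union bound over $L^{O(K)} = \exp(O(K^2))$ labelings) producing the $K^{1/3}$ exponent. Your worry about perfect completeness is exactly the right one --- Chan's PCP has completeness $1-K^2\eta$ rather than $1$ --- and the paper resolves it via the second route you anticipate: it quantifies in Lemma~\ref{lem:mainlemma} the weighted fraction of queries not covered by the $2^t$ long-code independent sets, shows this fraction is $\exp(-O(K))$, and simply deletes those FGLSS vertices before coloring, absorbing the loss into the soundness side.
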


In \cite{chan}, Chan also showed that for any $K \ge 3$, there is $\nu=o(1)$ such that given a graph with an induced
$K$-colorable subgraph of fractional size $1-\nu$, it is NP-hard to find an independent set of fractional size $1/2^{K/2}+\nu$.
We refer to \cite{dkps10, ks12, chan} for additional discussions on Almost-Coloring.

\section{Label Cover and PCP}
In this section we review basics of Label Cover and PCPs and describe Chan's improved PCP construction.

Let $(U,V,E,L,R,\Pi)$ be an instance of Label Cover, where $R=dL$ for some constant $d$, 
the tuple $(U,V,E)$ is a weighted
bipartite graph, vertices in $U$ are assigned labels from $[L]$, and vertices in $V$ 
are assigned labels from $[R]$.
Each edge $e=(u,v)$ is associated with $d$-to-1 mapping $\pi_e: [R] \to [L]$.
Given an assignment $A: U \to [L], V \to [R]$, constraint on $e$ is satisfied
if $\pi_e(A(v))=A(u)$. 
The following theorem combines the celebrated PCP theorem \cite{arorapcp1, arorapcp2} with Raz's parallel repetition theorem \cite{raz}
and shows hardness of Label Cover.
\begin{theorem}
  \label{thm:labelcover}
  For any constant $0<\sigma<1$,
  there are $d,R \le \mathrm{poly}(1/\sigma)$ such that the problem of deciding a 3-SAT instance with $n$ variables
  can be Karp-reduced in $poly(n)$ time to the problem of $(1,\sigma)$-deciding a Label Cover instance
  of size $n^{1+o(1)}$. Furthermore, $L$ is a bi-regular bipartite graph with left- and right-degrees
  $\mathrm{poly}(1/\sigma)$.
\end{theorem}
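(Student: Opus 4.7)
The plan is to combine the PCP theorem of Arora et al.\ \cite{arorapcp1, arorapcp2} with Raz's parallel repetition theorem \cite{raz}, which is the standard template for producing Label Cover hardness with arbitrarily small constant soundness and $d$-to-$1$ projection constraints.

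First, I would use the PCP theorem to reduce 3-SAT in polynomial time to a constant-gap instance of a 2-prover 1-round game obtained via the clause-vs-variable reduction: one prover is given a constraint and replies with a satisfying assignment to its variables, while the other is given a single variable and replies with a truth value. Starting from a CSP whose constraints each admit the same number of satisfying assignments --- for instance E3-LIN, with exactly four satisfying assignments per equation so that the clause-to-variable projection is exactly $2$-to-$1$ --- this yields a Label Cover instance with constant alphabets $L_0$ and $R_0 = d_0 L_0$, completeness $1$, some constant soundness $\sigma_0 < 1$, and exactly $d_0$-to-$1$ projection constraints. Bi-regularity is arranged by a routine duplication of clauses and variables.

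Next I would apply Raz's parallel repetition theorem with $t = \Theta(\log(1/\sigma))$ rounds. The repeated game has completeness $1$ and soundness at most $\sigma$; alphabets become $L_0^t$ and $R_0^t = d_0^t \cdot L_0^t$, left and right degrees become $d_0^t$, and each constraint is a coordinate-wise product of $d_0$-to-$1$ maps, hence $d_0^t$-to-$1$. Since $L_0, R_0, d_0$ are absolute constants and $t = O(\log(1/\sigma))$, all of these quantities are $\mathrm{poly}(1/\sigma)$, and bi-regularity is preserved under the Cartesian product.

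The stated $n^{1+o(1)}$ size bound is obtained by starting the reduction from a quasi-linear-size PCP (e.g.\ Moshkovitz--Raz, or a Dinur-style gap amplification) in place of the classical Arora et al.\ construction, so that the base Label Cover instance already has size $n^{1+o(1)}$. Raz's theorem is used as a black box, so the main conceptual work is already done for us; the only subtlety I expect is maintaining the exact $d$-to-$1$ projection structure after repetition, which is handled cleanly by choosing a base CSP whose constraints have a uniform number of satisfying assignments.
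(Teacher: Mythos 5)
The paper itself does not prove this theorem; it simply cites the PCP theorem together with Raz's parallel repetition theorem, which is exactly the template you follow, so structurally your proposal is the "same" proof. Your handling of the $d$-to-$1$ structure via a base CSP with a uniform number of satisfying assignments per constraint (e.g.\ E3-LIN with $2$-to-$1$ clause-to-variable projections) and the observation that $R_0^t$, $d_0^t$ are $\mathrm{poly}(1/\sigma)$ for $t = \Theta(\log(1/\sigma))$ are both correct.

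However, there is a genuine gap in the step where you claim the $n^{1+o(1)}$ size bound. Parallel repetition blows up the instance size polynomially: if the base Label Cover has $m$ edges, the $t$-fold repeated game has $m^t$ edges. So even if you start from a quasi-linear PCP with $m = n^{1+o(1)}$, after $t \geq 2$ rounds you get $m^t = n^{t(1+o(1))}$, which is polynomial but not $n^{1+o(1)}$. Starting from a shorter base PCP does not fix this; the exponent $t$ is what drives the blowup, and $t > 1$ is unavoidable as soon as the target soundness $\sigma$ is below the base PCP's soundness. Achieving genuinely $n^{1+o(1)}$-size Label Cover with small (even subconstant) soundness requires a different mechanism --- e.g.\ the Moshkovitz--Raz two-query projection PCP, or Dinur--Meir derandomized parallel repetition --- and those constructions pay for the savings in size with an exponentially larger alphabet, which would violate the stated bound $R \le \mathrm{poly}(1/\sigma)$. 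In other words, the combination of parameters in the theorem statement (constant soundness $\sigma$, $\mathrm{poly}(1/\sigma)$ alphabet, and $n^{1+o(1)}$ size simultaneously) is not delivered by the PCP-plus-Raz route you (and the paper, implicitly) invoke. Since the downstream argument only needs a polynomial-size reduction for a constant $\sigma$, this does not affect the paper's main result, but your proof as written does not establish the near-linear size claim and you should either weaken the size bound to $\mathrm{poly}(n)$ or swap in a fundamentally different PCP construction and check its alphabet size carefully.
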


\begin{comment}
The two query projection PCP by Moshkovitz and Raz \cite{subconstantpcp}
gives hardness for any error $\sigma>0$ (that can be any function of $n$).
\begin{theorem}
  \label{thm:subconstantpcp}
  For some constant $\beta>0$ and 
  any $\sigma=\sigma(n) \ge 1/(\log n)^{\beta}$, there exists label set $L$ and $R$ with
  $|L|, |R| = 2^{\mathrm{poly}(1/\sigma)}$, such that the problem of deciding a 3-SAT instance with $n$ variables
  can be reduced to the problem of $(1,\sigma)$-deciding a Label Cover instance
  of size $n^{1+o(1)} \cdot \mathrm{poly}(1/\sigma)$.
\end{theorem}
It is conjectured that a label size of $\mathrm{poly}{1/\sigma}$ can be achieved \cite{projectiongameconjecture}.
\end{comment}

As is the case with many MaxCSP inapproximability results, the above Label Cover will be the starting point
of our reduction.
Typically, the reduction translates labelings
for $u \in U$ and $v \in V$ to $2^{|L|}$ and $2^{|R|}$ Boolean variables,
respectively. These variables are viewed as functions
$f^u: \{-1,1\}^{|L|} \to \{-1,1\}$ and $g^v: \{-1,1\}^{|R|} \to \{-1,1\}$.
We require that these functions are folded, that is, 
for any $x \in \{-1,1\}^{|L|}$, $y \in \left\{ -1,1 \right\}^{|R|}$,
$f^u(-x)=-f^u(x)$ and $g^v(-y)=-g^v(y)$. This corresponds to having
negated literals in the CSP instances.
In a correct proof for a satisfiable Label-Cover instance, 
we expect the functions to be long codes for the corresponding
labelings of $u$ and $v$, that is, setting $f^u({x}) = x_{\sigma_U(u)}$,
and $g^v({y}) = y_{\sigma_V(v)}$.

For an edge $(u,v)$ in the Label-Cover, we sample {\em queries}
\[({x}^{(1)}, \cdots, {x}^{(m)}, {y}^{(m+1)}, \cdots, {y}^{(k)})\]
according to some carefully chosen {\em test distribution} $\mathcal{T}$.
The distribution $\mathcal{T}$ has the property that for any $l \in L$ and $r \in R$
such that $\pi_{(u,v)}(r)=l$, the predicate $P$ accepts
\[
(f_u({x}^{(1)}_{l}), \cdots, f_u({x}^{(m)}_{l}), g_v({y}^{(m+1)}_{r}), \cdots, g_v({y}^{(m+1)}_{r}))
\]
with probability 1 (or $1-\varepsilon$ for some small constant $\varepsilon$ if we
are considering non-perfect completeness).
We can think of this as a 2-player game, functions $f_u$ and $g_v$ are strategies of each player,
and the goal is to convince the verifier who uses predicate $P$ to decide acceptance.

In \cite{chan}, Chan developed a new way of constructing efficient PCPs and proved that the following
Hadamard predicate $H_K: \{-1,1\}^K \to \{0,1\}$ is approximation resistant
for $K=2^r-1$.
The predicate $H_K$ is on variables $\{x_S\}_{\emptyset \ne S \subseteq [k]}$, defined as
\[
H_K(x) = \left\{
\begin{array}{l l}
  1 & \forall S \subseteq [k], |S|>1, x_S = \prod_{i \in S} x_{\{i\}}\\
  0 & \mathrm{otherwise.}
\end{array}
\right.
\]
This predicate has $K+1$ accepting assignments.
Samorodnitsky and Trevisan \cite{hadamardugc} showed that $H_K$ is 
approximation resistance assuming the Unique Games Conjecture.
Using his new technique, Chan proved that this is true assuming $P \ne NP$.

The main idea in Chan's reduction is to consider a direct sum of $K$ independent $K$-player games.
In the $i$-th game, an edge $e_i$ is sampled as in the above 2-player game. Player $i$ gets
a uniform random string from $\{-1,1\}^{|L|}$ and all other players get samples from $\{-1,1\}^{|R|}$
as described below. Thus for each tuple of $K-1$ vertices from $V$ and 1 vertex from $U$,
the players will have a strategy which is a Boolean function
taking as input $(K-1)$ strings of length $R$ and 1 string of length $L$. 
In a correct proof, the strategies are expected to be products of long codes encoding the labeling of the vertices.
%% TODO define folding

We now formally define the PCP and how queries are sampled.
\begin{definition}
  \label{def:pcpreduction}
  Let $(U,V,E,L,R,\Pi)$ be a label cover instance. 
  Define $\mathcal{V}_i = V^{i-1} \times U \times V^{K-i}$ for $i \in [K]$.
  For each $\mathbf{v} \in \mathcal{V}_i$, the proof contains function 
  $\mathbf{f}_{\mathbf{v}}: \left(\{-1,1\}^{R}\right)^{i-1} \times \{-1,1\}^{L} \times 
  \left(\{-1,1\}^{R}\right)^{k-i} \to \{-1,1\}$.
  The verifier check the proof as follows:
  \begin{enumerate}
    \item Sample independently $K$ random edges $e_1=(u_1,v_1), \cdots, e_K=(u_K,v_K) \in E$.
    \item Let $\mathbf{v}_i = (v_1,\cdots,v_{i-1},u_i,v_{i+1},\cdots,v_{K})$. The verifiers queries
      $\{\mathbf{f}_{\mathbf{v}_i}(\mathbf{q}_i)\}_{i=1}^{K}$, where $\mathbf{q}_i$ is sampled as below.

      For $i \in [K]$, first sample uniformly $\mathbf{q}_{i,i} \in \{-1,1\}^{L}$. 
      Let $\pi$ be the projection of edge $e_i$.
      For each $r \in R$, pick a uniformly random point
      $(x_1, \cdots, x_K)$ from the subspace of $H_K$
      where $x_i=\mathbf{q}_{i,i,\pi(r)}$, and set $\mathbf{q}_{i,j,r}=x_j$ for $j \ne i$.
      Finally for each bit in $\mathbf{q}_i$, $i = 1, \cdots, K$, with probability $\eta$ we resample the bit from the uniform
      distribution on $\{-1,1\}$.
    \item 
      Let $\mathbf{f}(\mathbf{q}) = 
        (\mathbf{f}_{\mathbf{v}_1}(\mathbf{q}_1), \cdots, \mathbf{f}_{\mathbf{v}_K}(\mathbf{q}_K))$.
        Accept if $H_K(\mathbf{f}(\mathbf{q}))=1$.
  \end{enumerate}
\end{definition}
In a correct proof, the function $\mathbf{f}_{\mathbf{v}}$ is the product of long codes encoding the labeling
of each vertex in $\mathbf{v}$.

Theorem E.1 along with Theorem A.1, 6.9 and C.2 of \cite{chan} shows completeness and soundness of the above reduction
and we formulate as the following theorem.
\begin{theorem}
  \label{thm:chan}
  Fix some small $\eta, \delta>0$.
  Let the soundness of Label Cover $\sigma$ be chosen such that
  $\delta = \textrm{poly}(K/\eta) \cdot \sigma^{\Omega(1)}$.
  Given a Label Cover instance $LC_{L,dL}$, we have the following:
  \begin{enumerate}
    \item If $LC_{L,dL}$ has value 1, the verifier accepts a correct proof with probability at least $1-K^2 \eta$.
    \item If the verifier accepts with probability greater than $(K+1)/2^K+2\delta$, then $LC_{L,dL}$ has value at least $\sigma$.
  \end{enumerate}
\end{theorem}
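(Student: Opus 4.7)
The statement is a condensation of several results proved in \cite{chan}, so the cleanest route is to derive each of the two clauses from the cited components after unwrapping Definition \ref{def:pcpreduction} into the notation of that paper.

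For completeness, I would fix a satisfying labeling $(\sigma_U,\sigma_V)$ of the Label Cover instance and take as the honest prover the functions in which each $\mathbf{f}_{\mathbf{v}_i}$ is the product of $K$ folded long codes, the $j$-th factor encoding $\sigma_V(v_j)$ (or $\sigma_U(u_i)$ when $j=i$). The key structural observation is that each cross-coordinate block is drawn from the $H_K$-subspace with the $i$-th coordinate pinned to the honest long-code bit, and the $d$-to-1 consistency $\pi_{e_j}(\sigma_V(v_j))=\sigma_U(u_j)$ aligns the pinning across the $K$ functions. Consequently, whenever none of the bits read by the long codes is altered by the resampling step, the output tuple $(\mathbf{f}_{\mathbf{v}_1}(\mathbf{q}_1),\ldots,\mathbf{f}_{\mathbf{v}_K}(\mathbf{q}_K))$ lies in the accepting set of $H_K$. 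Each of the $K$ functions depends on exactly $K$ specific bits of its query, so at most $K^2$ query bits influence the verdict; a union bound over independent resampling events yields acceptance probability at least $(1-\eta)^{K^2}\ge 1-K^2\eta$.

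For soundness I would run the Fourier-analytic decoding that is standard for approximation-resistance PCPs, but with Chan's new ingredients. Expand $\mathbf{1}[H_K=1]=\sum_{\alpha\in\{0,1\}^K}\widehat{H_K}(\alpha)\chi_\alpha$, so that the acceptance probability becomes $\sum_\alpha \widehat{H_K}(\alpha)\,\E\!\left[\prod_{i=1}^K \mathbf{f}_{\mathbf{v}_i}(\mathbf{q}_i)^{\alpha_i}\right]$. The $\alpha=0$ term contributes exactly the random-guessing value $(K+1)/2^K$, so the assumed extra $2\delta$ of acceptance must come from non-constant characters. Since the $K$ sub-tests indexed by $i$ are independent once the uniform queries $\mathbf{q}_{i,i}$ are fixed, I would apply Chan's direct-sum analysis (Theorem~A.1 of \cite{chan}) together with the invariance-principle / noise-stability bounds (Theorems~6.9 and~C.2 of \cite{chan}) to show that any such character whose expectation exceeds $\mathrm{poly}(K/\eta)\cdot\sigma^{\Omega(1)}$ forces one of the $\mathbf{f}_{\mathbf{v}_i}$ to have a coordinate of large low-degree influence. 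Standard influence decoding, combined with the folding assumption and the $d$-to-1 projections, then produces a randomized labeling for Label Cover that satisfies at least a $\sigma$-fraction of edges.

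The principal obstacle is the soundness half; the completeness computation is essentially bookkeeping once the honest strategy is written down. Within soundness, the delicate step is the interplay between the $H_K$-subspace structure, the $d$-to-1 projections, and the $\eta$-noise: tracking how each parameter propagates through the invariance argument is what produces the final bound $\delta=\mathrm{poly}(K/\eta)\cdot\sigma^{\Omega(1)}$. Rather than re-deriving these invariance calculations, the plan is to quote Theorem~E.1 of \cite{chan} as the top-level assembly of Theorems~A.1, 6.9, and~C.2, and simply verify that the parameter setting of Definition~\ref{def:pcpreduction} falls exactly inside its range of validity.
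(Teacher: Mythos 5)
The paper does not prove this theorem itself: it explicitly states that it is a restatement of Theorems E.1, A.1, 6.9, and C.2 of Chan's paper, adapted to the notation of Definition~\ref{def:pcpreduction}. Your proposal takes the same route --- sketching the honest-prover/union-bound computation for completeness and the Fourier/invariance-decoding pipeline for soundness, but ultimately quoting Chan's Theorem~E.1 as the assembly --- so it matches the paper's (very short) treatment, with your additional elaboration being a correct and reasonable unpacking of what those cited results deliver.
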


\begin{comment}
Theorem C.2 of \cite{chan} shows the completeness and soundness of the above reduction and we formulate as the following.
\begin{theorem}
  \label{thm:chan}
  Fix some small $\eta, \delta>0$.
  Let $\tau=\tau(K,\eta,\delta)=(\eta\delta/2K)^{O(1/\eta)}$ be chosen 
  to satisfy $\delta \le poly(2K/\eta) \cdot \tau^{\Omega(\eta)}$ in Theorem 6.9 of Chan.
  Let the soundness of Label Cover be $\sigma \le \delta \eta^2 \tau^2/(K-1)$. 
  Given a Label Cover instance $LC_{L,dL}$, we have the following:
  \begin{enumerate}
    \item If $LC_{L,dL}$ has value 1, the verifier accepts a correct proof with probability at least $1-K^2 \eta$.
    \item If the verifier accepts with probability greater than $(K+1)/2^K+2\delta$, then $LC_{L,dL}$ has value at least $\sigma$.
  \end{enumerate}
\end{theorem}
\end{comment}

\section{Hardness of Approximating Chromatic Number}
In this section, we prove that for sufficiently large $K$, it is NP-hard to color
a $K$-colorable graph with less than $2^{K^{1/3}}$ colors.
For convenience of notation, we in fact prove a gap of $K^3$ versus $2^K$.

The overall idea is similar to that in Khot \cite{khotcoloring}.
We start by describing the FGLSS graph \cite{fglss} of the PCP in Definition \ref{def:pcpreduction}.
The vertices in the FGLSS graph are function
queries and corresponding accepting
configurations, denoted as $(\mathbf{f}_{\mathbf{v}},\mathbf{q}, \mathbf{x})$. 
The weight of the vertex is the probability that query $(\mathbf{f}_{\mathbf{v}},\mathbf{q})$ is picked.
The total weight of the graph is therefore $K+1$.
Two vertices are connected if they are clearly inconsistent
(returning different answers for the same query to the same function).
An independent set in the graph corresponds to a strategy / set of functions,
and its weight is the acceptance probability of such strategy.
For a PCP, the maximum weight independent set in its FGLSS graph 
is related to its acceptance probability. Note that if the maximum 
weight independent set has weight $w$, then we need at least $(K+1)/w$ colors to color
the whole graph since vertices having the same color must form an independent set.

To use the FGLSS graph for coloring results, we also need to show that if a PCP has acceptance probability
$1-\varepsilon$, we can color the FGLSS graph with a small number of colors.
As in \cite{khotcoloring}, we modify Definition \ref{def:pcpreduction} so that the functions in the proof become
$\mathbf{f}_{\mathbf{v}}: \left(\{-1,1\}^{R \cdot 2^{t}}\right)^{i-1} \times \{-1,1\}^{L \cdot 2^{t}}
\times \left(\{-1,1\}^{R \cdot 2^{t}}\right)^{k-i} \to \{-1,1\}$.
Alternatively, we can think of this as modifying Label Cover by appending a $t$-bit binary string to all the labels
and defining the new projection in the Label Cover instance
as $\pi_e'(r \circ \alpha)=\pi_e(r) \circ \alpha$ for $r \in R$ and $\alpha \in \{0,1\}^{t}$,
where ``$\circ$'' denotes string concatenation.
The value of this new Label Cover instance is exactly the same as the original setting.

Consider the FGLSS graph of the PCP based on the above modified Label Cover problem.
Soundness is straightforward. If the new proof makes the verifier accept with probability at least
$(K+1)/2^K+2\delta$, then the value of the new Label Cover is at least $\sigma$ and thus so is the original instance.

Let $\alpha \in \{0,1\}^t$ be some global parameter. If a Label Cover instance has value 1,
then there are at least $2^t$ different ways of assigning labels satisfying all the edges.
In a correct proof, functions would be product of long codes encoding labelings extended by
$\alpha$, denoted as $\mathbf{f}^{(\alpha)}$.
%where everything in the $i$-th PCP in the direct sum is extended by $\alpha$.
Next we prove that independent sets corresponding to different $\alpha \in \{0,1\}^t$ covers almost
all of the FGLSS graph of the modified PCP.
We remove the small fraction of vertices that are not covered and this gives the completeness of the reduction.

Formally, we follow Khot's notation and introduce 
the following definition characterizing whether we can cover certain vertex with
independent sets.

\begin{definition}
  \label{def:goodquery}
  A set of queries $\mathbf{q}$ is good if for any labelings to the vertices
  and any accepting assignment $\mathbf{z}$,
  there exists a global extension $\alpha$, such that if the functions $\mathbf{f}^{(\alpha)}$
  are products of long codes encoding the labelings extended by $\alpha$, then 
  $\mathbf{f}^{(\alpha)}(\mathbf{q})=\mathbf{z}$.
\end{definition}

Let $\delta = 2^{-\Omega(K)}$ be the soundness parameter of the PCP.
%, that is Chan's PCP has soundness $2^{-K}+\delta$. 
By Theorem \ref{thm:chan} of Chan, we require the soundness of Label Cover
to be $\sigma = (\delta / \textrm{poly}(K/\eta))^{o(1)}$.
%to be $\sigma \le \delta\eta\tau^2/(K-1)$, where $\tau = (\eta\delta/K)^{O(1/\eta)}$.

\begin{lemma}
  \label{lem:mainlemma}
  Let $t$ be such that $2^t=C \cdot K^{3}$ for some large constant $C$. For large enough $K$, 
  at most a weighted fraction of $\exp(-O(K))$ of the queries is not good.
\end{lemma}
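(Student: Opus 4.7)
The plan is to argue by a union bound over pairs $(\sigma,\mathbf{z})$, where $\sigma$ ranges over labelings of the (at most) $2K$ vertices involved in the query $\mathbf{q}$ and $\mathbf{z}$ ranges over the $K+1$ accepting assignments of $H_K$. For each such pair I bound, over the randomness of the test, the probability that no extension $\alpha\in\{0,1\}^t$ achieves $\mathbf{f}^{(\alpha),\sigma}(\mathbf{q})=\mathbf{z}$.

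The first ingredient is independence across $\alpha$: under the modified Label Cover the bit $\mathbf{q}_{i,j}[r\circ\alpha]$ depends only on the ``$\alpha$-slice'' of the randomness---a distinct Hadamard sample at index $(r,\alpha)$ and a distinct coordinate of $\mathbf{q}_{i,i}$---and different slices are sampled independently. Consequently, for every fixed $\sigma$, the $2^t$ vectors $\mathbf{f}^{(\alpha),\sigma}(\mathbf{q})$, $\alpha\in\{0,1\}^t$, are i.i.d.\ random vectors in $\{-1,1\}^K$.

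The second ingredient is a lower bound $p_\sigma(\mathbf{z}):=\Pr_{\mathbf{q}}[\mathbf{f}^{(\alpha),\sigma}(\mathbf{q})=\mathbf{z}]=\Omega(1/K)$ for every satisfying $\sigma$ and every accepting $\mathbf{z}$. Chan's completeness (Theorem~\ref{thm:chan}) applied to $\sigma$ extended by $\alpha$ gives $\Pr[\mathbf{f}^{(\alpha),\sigma}(\mathbf{q})\in H_K^{-1}(1)]\ge 1-K^2\eta$; a symmetry argument using the folding of the long codes together with the coset structure of the $H_K$-accepting set under the test distribution then spreads this mass nearly uniformly over the $K+1$ accepting assignments, yielding $p_\sigma(\mathbf{z})\ge(1-K^2\eta)/(K+1)$. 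Combining independence with this marginal bound,
\[
\Pr\bigl[\forall\alpha\colon\mathbf{f}^{(\alpha),\sigma}(\mathbf{q})\ne\mathbf{z}\bigr]\le(1-p_\sigma(\mathbf{z}))^{2^t}\le\exp(-\Omega(CK^2))
\]
with $2^t=CK^3$. The number of labelings of the $2K$ involved vertices that are consistent with the $K$ Label Cover edges in the query is at most $(Ld)^K=2^{O(K^2)}$, since the choice $\sigma_{\mathrm{LC}}=2^{-\Omega(K)}$ dictated before the lemma implies $L,d\le 2^{O(K)}$ by Theorem~\ref{thm:labelcover}. A union bound over these labelings and the $K+1$ accepting $\mathbf{z}$ yields
\[
\Pr[\mathbf{q}\text{ not good}]\le 2^{O(K^2)}\cdot(K+1)\cdot\exp(-\Omega(CK^2))=\exp(-\Omega(K^2))
\]
for $C$ chosen large enough compared with the implicit constant in $O(K^2)$, comfortably within the required $\exp(-O(K))$.

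The hard part will be the symmetry step in the marginal bound: Theorem~\ref{thm:chan} directly supplies only the total acceptance probability, and promoting it to a per-assignment lower bound requires revisiting the completeness analysis of Chan's PCP and exploiting the folding of the long codes together with the natural symmetries of the $H_K$-accepting set. Everything else---the independence across $\alpha$, the tail bound, and the counting of consistent labelings---is relatively routine once this uniformity is in hand.
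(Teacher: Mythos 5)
Your overall scaffolding matches the paper's: union bound over labelings of the $O(K)$ vertices hit by the test and over the $K+1$ accepting words $\mathbf{z}$, independence of the events across extensions $\alpha$, a marginal lower bound $p_\sigma(\mathbf{z})=\Omega(1/K)$, and then the tail bound $(1-\Omega(1/K))^{2^t}$. The counting and final arithmetic ($L\le\mathrm{poly}(1/\sigma)=\exp(O(K))$, so $\exp(O(K^2))\cdot\exp(-\Omega(CK^2))=\exp(-\Omega(K^2))$ for $C$ large) also agree with the paper, which in fact records the weaker $\exp(-O(K))$.

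The genuine difference --- and the genuine gap --- is in the marginal bound, which you correctly flag as the hard part and leave unproven. You propose to start from the black-box statement of Theorem~\ref{thm:chan} ($\Pr[\text{accept}]\ge 1-K^2\eta$) and then spread the mass over $H_K^{-1}(1)$ by a folding/symmetry argument. The paper does not go through Theorem~\ref{thm:chan} at all; it reads off the bound directly from the test distribution in Definition~\ref{def:pcpreduction}. Since for each $r$ the verifier samples a \emph{uniform} point of the $H_K$-accepting subspace at the relevant coordinates, a correct proof (product of long codes for a satisfying labeling, extended by the fixed $\alpha$) gives an answer vector that, \emph{before} the $\eta$-noise is applied, is \emph{exactly} uniform over the $K+1$ accepting words; hence each $\mathbf{z}$ is hit with probability exactly $1/(K+1)$. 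The noise touches at most the $K^2$ relevant query bits, so the final probability is at least $(1-\eta)^{K^2}/(K+1)$, which with $\eta=O(1/K^2)$ is $\Omega(1/K)$. That is the whole computation. By contrast, the symmetry route you sketch is delicate: folding only gives the involution $\mathbf{q}\mapsto-\mathbf{q}$, and since $-\mathbf{z}$ need not be $H_K$-accepting (it is accepting only when every $|S|>1$ is odd), one would need a more refined group action on queries that simultaneously preserves the test distribution and acts transitively on $H_K^{-1}(1)$; establishing that would amount to redoing, in a more roundabout way, the exact-uniformity observation the paper uses directly. So the proposal is structurally on target, but the step you call ``the hard part'' is precisely the step that needs to be filled in, and the paper's route to it is considerably simpler than the one you outline.
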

Remove the vertices in the FGLSS graph that correspond to queries that are not good.
The fraction of vertices removed is bounded by $\exp(-O(K))$.
In the soundness case coloring the FGLSS graph still
needs at least $K(1-\exp(-O(K)))/2^{-O(K)}=2^{\Omega(K)}$ colors.
In the completeness case, each $\alpha \in \{0,1\}^t$ is associated with an independent set
consisting of vertices the form $(\mathbf{f}_{\mathbf{v}},\mathbf{q},\mathbf{z})$, where 
$\mathbf{z}=\mathbf{f}^{(\alpha)}(\mathbf{q})$ and $\mathbf{f}^{(\alpha)}$ is a product of long codes
encoding labelings of $\mathbf{v}$ extended by $\alpha$.
Consider any vertex $(\mathbf{f}_{\mathbf{v}},\mathbf{q},\mathbf{x})$ in the modified FGLSS graph.
By definition $\mathbf{q}$ is good so there exists $\alpha_0 \in \{0,1\}^t$
such that $\mathbf{f}_{\mathbf{v}}(\mathbf{q})=\mathbf{f}^{(\alpha_0)}(\mathbf{q})=\mathbf{x}$,
so it is covered by the independent set associated with $\alpha_0$.
Therefore the modified FGLSS graph can be colored with $2^t=O(K^{3})$ colors.
% TODO Also, it is not hard to show that the degree of the FGLSS graph is bounded by some double exponential of $K$.

\begin{proof}[Proof of Lemma \ref{lem:mainlemma}.]
  Let $\eta = O(1/K^2)$. We thus have that the label size $L = \mathrm{poly}(1/\sigma)=\exp(O(K))$.

  Fix some labeling of the label cover instance and some accepting assignment $\mathbf{z}$. 

%  Note that the queries $\mathbf{q}$ are generated by sampling from hypergraph linearity test distribution
%  and then adding independent noise to each bit. 

  Consider $\alpha \in \{0,1\}^t$. %, and let $\mathbf{f}$ be as in the correct proofs.
  Over the queries sampled,
  the probability that $\mathbf{f}^{(\alpha)}(\mathbf{q})=\mathbf{z}$
  is $O(1/K \cdot (1-\eta)^{K^2}) = O(e^{-\eta K^2}/K)=O(1/K)$
  --- the probability of answering $\mathbf{z}$ on $\mathbf{q}$ before adding noise is $1/(K+1)$, 
  there are $K$ functions,
  each being a product of $K$ long codes, therefore 
  the answer $\mathbf{f}^{(\alpha)}(\mathbf{q})$ depends on $K^2$ bits,
  and if none of the $K^2$ bits are corrupted then the answer is correct.
  The contribution from other sources to answer $\mathbf{z}$ is negligible.
  Note that for different extension $\alpha$, the bits that $\mathbf{f}^{(\alpha)}$ reads
  from $\mathbf{q}$ are completely different and therefore independent, so we have
  \[
  \Pr_{\mathbf{q}}\left[\forall \alpha, \mathbf{f}^{(\alpha)}(\mathbf{q}) \ne \mathbf{z}\right] = 
  (1-O(1/K))^{2^t} = \exp(-O(2^t/K)).
  \]

  For query $\mathbf{q}$, let $Q(\mathbf{q})$ be the event that $\mathbf{q}$ is not good
  in the sense of Definition \ref{def:goodquery}: there exists
  some labeling and some accepting assignment $\mathbf{z}$, such that 
  $\mathbf{f}^{(\alpha)}(\mathbf{q}) \ne \mathbf{z}$ for any $\alpha$.
  Taking union bound over all possible labelings and accepting configurations, we get 
  that the weighted fraction of $\mathbf{q}$ that are bad is
  \[
  \Pr_{\mathbf{q}}\left[ Q(\mathbf{q}) \right]  \le L^{O(K)} K \exp(-O(2^t/K)) = \exp(-O(K)).
  \]
\end{proof}

\section*{Acknowledgements}
The author is grateful to Siu On Chan for pointing out Theorem E.1 
in his paper, leading to a gap of $K$ vs. $2^{K^{1/3}}$, improving the original
gap of $K$ vs. $2^{K^{1/5}}$.

\bibliographystyle{abbrv}
\bibliography{bib}

\begin{thebibliography}{10}

\bibitem{arorapcp1}
S.~Arora, C.~Lund, R.~Motwani, M.~Sudan, and M.~Szegedy.
\newblock Proof verification and the hardness of approximation problems.
\newblock {\em J. ACM}, 45(3):501--555, 1998.

\bibitem{arorapcp2}
S.~Arora and S.~Safra.
\newblock Probabilistic checking of proofs: A new characterization of {NP}.
\newblock {\em J. ACM}, 45(1):70--122, 1998.

\bibitem{blumkarger}
A.~Blum and D.~R. Karger.
\newblock An ${\tilde{\uppercase{o}}}(n^{3/14})$-coloring algorithm for
  3-colorable graphs.
\newblock {\em Inf. Process. Lett.}, 61(1):49--53, 1997.

\bibitem{chan}
S.~O. Chan.
\newblock Approximation resistance from pairwise independent subgroups.
\newblock {\em Electronic Colloquium on Computational Complexity (ECCC)},
  19:110, 2012.

\bibitem{cmm09}
M.~Charikar, K.~Makarychev, and Y.~Makarychev.
\newblock Near-optimal algorithms for maximum constraint satisfaction problems.
\newblock {\em ACM Transactions on Algorithms}, 5(3), 2009.

\bibitem{dkps10}
I.~Dinur, S.~Khot, W.~Perkins, and M.~Safra.
\newblock Hardness of finding independent sets in almost 3-colorable graphs.
\newblock In {\em FOCS}, pages 212--221. IEEE Computer Society, 2010.

\bibitem{dmrcoloring}
I.~Dinur, E.~Mossel, and O.~Regev.
\newblock Conditional hardness for approximate coloring.
\newblock {\em SIAM J. Comput.}, 39(3):843--873, 2009.

\bibitem{fglss}
U.~Feige, S.~Goldwasser, L.~Lov{\'a}sz, S.~Safra, and M.~Szegedy.
\newblock Approximating clique is almost {NP}-complete (preliminary version).
\newblock In {\em FOCS}, pages 2--12. IEEE Computer Society, 1991.

\bibitem{hard5}
V.~Guruswami and S.~Khanna.
\newblock On the hardness of 4-coloring a 3-colorable graph.
\newblock {\em SIAM J. Discrete Math.}, 18(1):30--40, 2004.

\bibitem{hast}
G.~Hast.
\newblock Beating a random assignment.
\newblock {\em PhD Thesis}, 2005.

\bibitem{hastadkhot}
J.~H{\aa}stad and S.~Khot.
\newblock Query efficient {PCP}s with perfect completeness.
\newblock {\em Theory of Computing}, 1(1):119--148, 2005.

\bibitem{kargermotwanisudan}
D.~R. Karger, R.~Motwani, and M.~Sudan.
\newblock Approximate graph coloring by semidefinite programming.
\newblock {\em J. ACM}, 45(2):246--265, 1998.

\bibitem{hard3}
S.~Khanna, N.~Linial, and S.~Safra.
\newblock On the hardness of approximating the chromatic number.
\newblock {\em Combinatorica}, 20(3):393--415, 2000.

\bibitem{khotcoloring}
S.~Khot.
\newblock Improved inaproximability results for maxclique, chromatic number and
  approximate graph coloring.
\newblock In {\em FOCS}, pages 600--609. IEEE Computer Society, 2001.

\bibitem{ks12}
S.~Khot and R.~Saket.
\newblock Hardness of finding independent sets in almost q-colorable graphs.
\newblock In {\em FOCS}, pages 380--389. IEEE Computer Society, 2012.

\bibitem{raz}
R.~Raz.
\newblock A parallel repetition theorem.
\newblock {\em SIAM J. Comput.}, 27(3):763--803, 1998.

\bibitem{stfirst}
A.~Samorodnitsky and L.~Trevisan.
\newblock A {PCP} characterization of {NP} with optimal amortized query
  complexity.
\newblock In F.~F. Yao and E.~M. Luks, editors, {\em STOC}, pages 191--199.
  ACM, 2000.

\bibitem{hadamardugc}
A.~Samorodnitsky and L.~Trevisan.
\newblock Gowers uniformity, influence of variables, and {PCP}s.
\newblock {\em SIAM J. Comput.}, 39(1):323--360, 2009.

\end{thebibliography}

\end{document}